\documentclass[12pt]{article}
\usepackage[margin=1in]{geometry} 
\usepackage{amsmath,amsthm,amssymb,todonotes}
\usepackage{multirow}
\usepackage{times}
\usepackage{listings}
\usepackage{graphicx}
\usepackage{enumerate}
\usepackage{mathrsfs}
\usepackage[colorlinks,
linkcolor=blue,
anchorcolor=blue,
citecolor=blue
]{hyperref}
\usepackage{dsfont}
\usepackage{float}
\usepackage[title]{appendix}
\usepackage{booktabs}
\usepackage{caption}
\usepackage{tikz}
\usetikzlibrary{decorations.pathreplacing}

\usepackage{chngcntr}

\usepackage{natbib}

\usepackage{setspace}
\onehalfspacing


\newcommand{\R}{\mathbb{R}}

\makeatletter

\newcommand{\Rmnum}[1]{\expandafter\@slowromancap\romannumeral #1@}
\makeatother
\allowdisplaybreaks

\newtheorem{theorem}{Theorem}[section]
\newtheorem{proposition}[theorem]{Proposition}
\newtheorem{lemma}[theorem]{Lemma}
\newtheorem{definition}[theorem]{Definition}
\newtheorem{corollary}[theorem]{Corollary}

\numberwithin{equation}{section}

\title{A Concavification Approach to Ambiguous Persuasion}
\author{Xiaoyu Cheng\footnote{
		Department of Managerial Economics and Decision Sciences, Kellogg School of Management, Northwestern University, Evanston, IL, USA. E-mail: xiaoyu.cheng@kellogg.northwestern.edu}}
\begin{document}
	\maketitle
	
	\begin{abstract}
		This note shows that the value of ambiguous persuasion characterized in \citet*{beauchene2019ambiguous} can be given by a concavification program as in Bayesian persuasion \citep*{kamenica2011bayesian}. In addition, it implies that an ambiguous persuasion game can be equivalently formalized as a Bayesian persuasion game by distorting  the utility functions. This result is obtained under a novel construction of ambiguous persuasion. 
	\end{abstract}
	
	\section{Introduction}
	\citet*{beauchene2019ambiguous} (BLL henceforth) develops a theory of ambiguous persuasion where the sender is allowed to commit to an ambiguous information structure. Under the assumption that both the sender and receiver are Maxmin Expected Utility (MEU) decision makers \citep*{gilboa1989maxmin}  and the receiver applies full Bayesian updating \citep*{pires2002rule}, they characterize the value of ambiguous persuasion and show that it can be strictly higher than the value of Bayesian persuasion. 
	
	Their characterization is given by a maximal projection of the concave closure of a value function which is different from the standard concavification in Bayesian persuasion. The present note shows that the value of ambiguous persuasion characterized in their paper in fact can be given by a standard concavification program. It further implies that, the ambiguous persuasion game is equivalent to a Bayesian persuasion game where the players' utility functions are distorted. 
	
	This result is built on a novel construction of ambiguous persuasion different from BLL. In Bayesian persuasion, \cite{kamenica2011bayesian} show that the sender's objective is to choose among \textit{distributions over posteriors}. For  ambiguous persuasion, a natural generalization adopted by BLL is to let the sender choose among \textit{sets of distributions over posteriors}. In the present note, I show it is equivalent for the sender to choose from \textit{distributions over sets of posteriors}. 
	
	Under this construction, the \textit{Bayes Plausibility} condition is generalized to a condition called \textit{Verifiable Bayes Plausibility}. By definition, a distribution over sets of posteriors can be verified to be Bayes plausible if one can select a posterior from each set of posteriors such that the distribution over the selected posteriors satisfies the Bayes plausibility. 
	
	While not all ambiguous information structure give rise to a distribution over sets of posteriors, I show that any distribution over sets of posteriors that is Verifiably Bayes Plausible (VBP) can be induced by an ambiguous information structure. Moreover, the sender's optimal value from choosing among VBP distributions coincides with the optimal value from ambiguous persuasion. Therefore, Verifiable Bayes Plausibility is exactly the counterpart of Bayes Plausibility in ambiguous persuasion. The concavification result is then derived from this observation.\\
	
	The remainder of this note is organized as follows. Section \ref{result} defines the ambiguous persuasion game and states the main result. Section \ref{construction} describes the new construction and proves the main result.

	\section{The Main Result}\label{result}
	\subsection{Setup}
	Let $\Omega$ denote a finite set of states. For any set $X$, let $\Delta(X)$ be the space of probability distributions over $X$ endowed with the weak topology. The sender and receiver have a common \textit{prior} $p_{0} \in \Delta(\Omega)$ with full support. 
	
	The receiver's feasible actions are given by a compact set $A$. The sender and receiver's payoff depend on the receiver's action and state, given by continuous functions $u_{S}$ and $u_{R}$, respectively. Namely, $u_{S}(a, \omega)$ and $u_{R}(a,\omega)$ are the payoffs of the players when the receiver's action is $a \in A$ and the realized state is $\omega \in \Omega$. Let $f \in \Delta(A)$ denote a generic strategy of the receiver. 
	
	Let $\mathcal{K}(\Omega)$ denote the set of all closed and convex subsets of $\Delta(\Omega)$. Fix any $P \in \mathcal{K}(\Omega)$, let $\bar{f}(P)$ denote the set of strategies that is optimal to the receiver when his belief is given by the set $P$. Namely, 
	\begin{equation*}
	\bar{f}(P) \equiv \arg\max\limits_{f \in \Delta(A)} \min\limits_{p \in P} \sum\limits_{a,\omega} \hat{f}(P)(a) u_{R}(a,\omega)p(\omega)
	\end{equation*}
	
	When $\bar{f}(P)$ is not a singleton, for every $p \in P$, define $\hat{f}_{p}(P) \in \bar{f}(P)$ to be the sender's most preferred strategy evaluated at belief $p$. That is, 
	\begin{equation*}
	\sum\limits_{a,\omega} \hat{f}_{p} (P)(a) u_{S}(a,\omega)p(\omega) \geq \sum\limits_{a,\omega} f(a) u_{S}(a,\omega)p(\omega)  \quad \forall f \in \bar{f}(P)
	\end{equation*}
	
	The sender's preferred action also depends on the posterior where she evaluates it is a new feature of ambiguous persuasion. It will become clear in the next section that the sender can always induce the receiver to form a posterior belief $P$ meanwhile she evaluates the receiver's actions at the posterior $p \in P$. 
	
	\subsection{The Concavification}
	Let $v: \Delta(\Omega) \rightarrow \R$ be the sender's value function defined by the following: 
	\begin{equation*}
		v(p) =  \max\limits_{P \in \mathcal{K}(\Omega):  p \in P} \sum\limits_{a,\omega}\hat{f}_{p}(P)(a)u_{R}(a,\omega)p(\omega)
	\end{equation*}
	Namely, the sender's value at a distribution $p$ is given by maximizing over all possible sets of distributions containing it. Given the receiver takes the sender-preferred action evaluating at $p$, $v(p)$ is clearly upper semi-continuous. Let $\hat{v}(p)$ denote the concave closure of $v(p)$: 
	\begin{equation*}
		\hat{v}(p) \equiv \inf \{H(p) | H: \Delta(\Omega) \rightarrow \R, H \geq v, H \text{ is affine and continuous}\}
	\end{equation*}
	
	The main result of this note claims that the concavification of this value function characterizes the value of optimal ambiguous persuasion. 
	
	\begin{theorem}\label{concave}
		The value of optimal ambiguous persuasion at prior $p_{0}$ is equal to $\hat{v}(p_{0})$. 
	\end{theorem}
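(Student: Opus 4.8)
The plan is to combine the construction developed in Section~\ref{construction} with a concavification argument as in \citet{kamenica2011bayesian}. The construction supplies the fact that does all the conceptual work: the value of optimal ambiguous persuasion at $p_{0}$ equals the supremum, over \emph{Verifiably Bayes Plausible} distributions over sets of posteriors, of the sender's expected payoff; that is, it equals
\[
  \sup\Bigl\{\, \sum_i \lambda_i\, w(p_i,P_i)\ :\ \sum_i \lambda_i\,\delta_{P_i}\in\Delta(\mathcal{K}(\Omega)),\ p_i\in P_i,\ \sum_i \lambda_i\, p_i = p_{0} \Bigr\},
\]
where $w(p,P)$ denotes the sender's payoff when the receiver's belief set is $P$ and he plays the sender-preferred optimal action $\hat{f}_{p}(P)$, and the constraint $\sum_i\lambda_i p_i=p_{0}$ is Verifiable Bayes Plausibility. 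This identity has two halves, both established in Section~\ref{construction}: every VBP distribution, together with its Bayes-plausible selection $\{p_i\}$, is induced by some ambiguous information structure under which, on the event $P_i$, the receiver's belief set is exactly $P_i$ while the sender evaluates the receiver's action at $p_i$; and conversely no ambiguous information structure gives the sender strictly more than the displayed supremum. Since by definition $v(p)=\max_{P\in\mathcal{K}(\Omega):\,p\in P} w(p,P)$, the theorem reduces to showing that this supremum equals $\hat v(p_{0})$.

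For the inequality ``$\le\hat v(p_{0})$'' I would take any feasible $\bigl(\{\lambda_i,P_i\},\{p_i\}\bigr)$; since $p_i\in P_i$ forces $w(p_i,P_i)\le v(p_i)$, concavity of $\hat v$ together with $v\le\hat v$ and $\sum_i\lambda_i p_i=p_{0}$ gives $\sum_i \lambda_i w(p_i,P_i)\le \sum_i \lambda_i v(p_i)\le \hat v(\sum_i \lambda_i p_i)=\hat v(p_{0})$, and taking the supremum yields one direction. For ``$\ge\hat v(p_{0})$'' I would use that $\Delta(\Omega)$ is a compact convex subset of a finite-dimensional space and that $v$ is upper semicontinuous, so that $\hat v$---defined through affine continuous majorants---coincides with the upper concave envelope of $v$; hence for every $\varepsilon>0$ there are finitely many $q_i\in\Delta(\Omega)$ and weights $\lambda_i\ge 0$ summing to one with $\sum_i\lambda_i q_i=p_{0}$ and $\sum_i\lambda_i v(q_i)\ge\hat v(p_{0})-\varepsilon$. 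For each $i$ the definition of $v$ lets me pick $P_i\in\mathcal{K}(\Omega)$ with $q_i\in P_i$ and $w(q_i,P_i)\ge v(q_i)-\varepsilon$ (the maximum defining $v(q_i)$ is in fact attained, since $\{P\in\mathcal{K}(\Omega):q_i\in P\}$ is compact in the Hausdorff metric and $P\mapsto w(q_i,P)$ is upper semicontinuous). Then $\sum_i\lambda_i\,\delta_{P_i}$ with selection $\{q_i\}$ is VBP and delivers $\sum_i\lambda_i w(q_i,P_i)\ge\hat v(p_{0})-2\varepsilon$, which by Section~\ref{construction} is achievable by an ambiguous information structure; letting $\varepsilon\downarrow 0$ gives the reverse inequality, and combining the two proves the theorem.

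I expect the genuinely hard part to be the Section~\ref{construction} reduction rather than the concavification step, which is routine once the reduction is available. What needs work there is showing that letting the sender choose a distribution over \emph{sets} of posteriors subject only to Verifiable Bayes Plausibility is exactly as powerful as her original problem over ambiguous information structures---and, in particular, that she can effectively control at which posterior $p_i\in P_i$ the receiver's play is evaluated. Within the concavification step itself, the two points needing care are the selection/attainment argument used for ``$\ge$'' (where upper semicontinuity of $v$ enters) and the identification of $\hat v$, defined via affine continuous majorants, with the Jensen-type upper envelope of $v$; both are standard for upper semicontinuous functions on a finite-dimensional compact convex set, but I would state them explicitly.
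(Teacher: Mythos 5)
Your proof is correct and follows essentially the same route as the paper's: both directions rest on the Section~\ref{construction} reduction to Verifiably Bayes Plausible distributions (the paper's Proposition~\ref{equivalence}), the upper bound $\le\hat v(p_0)$ uses $w(p,P)\le v(p)$ for $p\in P$ plus concavity, and the lower bound $\ge\hat v(p_0)$ constructs a VBP $\mu$ with verifying selection from a Bayes-plausible $\tau$ by choosing, for each posterior in its support, a set $P$ attaining the max in the definition of $v$. The only differences are cosmetic — the paper phrases the upper bound as a contradiction and asserts attainment outright, whereas you argue directly and spell out the two technical points (attainment of the max defining $v$, and that $\hat v$ defined via affine continuous majorants equals the upper concave envelope of the u.s.c.\ function $v$) that the paper leaves implicit.
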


	The proof of this theorem relies on a novel construction of the ambiguous persuasion game which will be provided in the next section. Theorem \ref{concave} implies that the ambiguous persuasion game is equivalent to a Bayesian persuasion game with $v(p)$ being the sender's corresponding value function. Therefore, the optimal value can be solved in the same way as solving a Bayesian persuasion problem. 
	
	\begin{corollary}\label{bayesian}
		The value of optimal ambiguous persuasion is given by the following: 
		\begin{align*}
			&\max\limits_{\tau \in \Delta(\Delta(\Omega))}	\mathbb{E}_{\tau}[v(p)] \\
			&\text{s. t. } \sum\limits_{p\in supp(\tau)} \tau(p) p = p_{0}
		\end{align*}
	\end{corollary}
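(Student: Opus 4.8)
The plan is to derive the corollary from Theorem~\ref{concave} by invoking the concavification identity familiar from Bayesian persuasion. By Theorem~\ref{concave} the value of optimal ambiguous persuasion at $p_{0}$ is $\hat v(p_{0})$, so it suffices to show that $\hat v(p_{0})$ equals the value of the program displayed in the statement. This is precisely the concave-closure characterization in \citet{kamenica2011bayesian} applied to the value function $v$: the displayed program is a Bayesian persuasion problem whose sender value function happens to be $v$, and its value is by definition the concave closure of $v$ evaluated at the prior. Thus no new argument beyond Theorem~\ref{concave} is really needed; the steps below just make the identification explicit.

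First I would verify that $v$ has the regularity needed for the concavification lemma: $\Delta(\Omega)$ is a compact simplex because $\Omega$ is finite, and $v$ is bounded (since $u_{R}$ is continuous on the compact set $A\times\Omega$) and upper semi-continuous (noted after the definition of $v$). Next I would record the easy inequality $\hat v(p_{0})\ge \sup_{\tau}\E_{\tau}[v(p)]$: for any affine continuous $H\ge v$ and any $\tau$ with $\E_{\tau}[p]=p_{0}$, affinity gives $H(p_{0})=\E_{\tau}[H(p)]\ge \E_{\tau}[v(p)]$; taking the infimum over such $H$ and the supremum over feasible $\tau$ yields the bound. For the reverse inequality I would set $w(p)\equiv\sup\{\E_{\tau}[v(p')]:\E_{\tau}[p']=p\}$ and check that $w$ is concave, dominates $v$ (take a point mass at $p$), and is upper semi-continuous with the supremum attained (portmanteau together with compactness of the set of feasible $\tau$ in the weak topology); then, since a concave upper semi-continuous function on a compact convex set is the infimum of its affine continuous majorants and every majorant of $w$ also majorizes $v$, I obtain $\hat v\le w$. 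Combining the two inequalities gives $\hat v(p_{0})=w(p_{0})$, so the supremum is attained and "$\sup$" becomes "$\max$", and Carathéodory's theorem lets the support of an optimal $\tau$ be taken of size at most $|\Omega|$.

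The only point requiring genuine care is establishing that $w$ is upper semi-continuous and that the supremum defining it is attained; this is where compactness of $\Delta(\Omega)$ (hence of $\Delta(\Delta(\Omega))$) and the upper semi-continuity of $v$ enter. Everything else is bookkeeping, and in fact one could dispose of the corollary in a single sentence by citing the concavification theorem of \citet{kamenica2011bayesian}: the substantive work — showing that ambiguous persuasion is a Bayesian persuasion problem with the distorted value function $v$ — is carried by Theorem~\ref{concave}, and Corollary~\ref{bayesian} merely restates it in concavification form.
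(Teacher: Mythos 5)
Your proof is correct and follows essentially the same route as the paper, which treats the corollary as an immediate consequence of Theorem~\ref{concave} together with the Kamenica--Gentzkow concavification identity; the paper gives no separate proof, and the regularity checks you supply (upper semi-continuity of $v$, compactness of $\Delta(\Omega)$, attainment of the supremum) are the right details to make that inference explicit.
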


	\section{The Construction} \label{construction}
	\subsection{Information Structure} 
	Fix a finite message space $M$, the sender can commit to a probabilistic device $\pi$ which is a mapping from $\Omega$ to probability distributions over $M$. Let $\pi(m|\omega) $ denote the probability of sending message $m$ in state $\omega$ under the device $\pi$. Moreover, let $\tau_{\pi}(m) \equiv \sum_{\omega} p_{0}(\omega) \pi(m|\omega)$ denote the overall probability of sending message $m$ under device $\pi$. An ambiguous device $\Pi$ is a closed and convex set of probabilistic devices with common support\footnote{For any $\pi, \pi' \in \Pi$, $\tau_{\pi}(m)>0$ if and only if $\tau_{\pi'}(m)>0$. }. 
	
	Fix an ambiguous device $\Pi$, each realized message $m$ induces a \textit{probability-possibly} set $P_{m}$ defined as the set of posteriors from applying Bayes' rule with respect to every probabilistic device in $\Pi$. Namely, 
	\begin{equation*}
		P_{m}^{\Pi} \equiv  \left\{p_{m}^{\pi} \in \Delta(\Omega): p_{m}^{\pi}(\cdot) = \frac{\pi(m|\cdot)p_{0}(\cdot)}{\sum\limits_{\omega' \in \Omega} \pi(m|\omega')p_{0}(\omega')}, \pi \in \Pi \right\} 
	\end{equation*}
	Notice that $P_{m}^{\Pi} \in \mathcal{K}(\Omega)$. When the receiver applies full Bayesian updating, the probability-possibility set is exactly the sets of posterior beliefs after observing the realized message $m$. 
	
	\subsection{Verifiable Bayes Plausibility}
	For any $P, Q \in \mathcal{K}(\Omega)$, a distribution over closed and convex sets of distributions is denoted by $\mu \in \Delta(\mathcal{K}(\Omega))$. If a distribution $\mu$ is induced by an ambiguous device, then it needs to satisfy the following condition. 

\begin{definition}
	A distribution $\mu$ is \textbf{Verifiably Bayes Plausible} if there exists a selection function $\varphi: \mathcal{K}(\Omega)\rightarrow \Delta(\Omega)$ with $\varphi(P) \in P$ such that 
	\begin{equation*}
		\sum\limits_{P\in supp(\mu)} \mu(P) \varphi(P) = p_{0}
	\end{equation*}
\end{definition}

	A distribution $\mu$ is verifiably Bayes plausible if Bayes plausibility can be verified by selecting a posterior from each set of posteriors in its support. Such a selection is then called the \textit{verifying selection}. A verifiably Bayes plausible (VBP) distribution $\mu$ may have multiple verifying selections. Let $\Phi_{\mu}$ denote the set of all verifying selections of $\mu$, i.e. 

\begin{equation*}
	\Phi_{\mu} = \left\{\varphi(\cdot): \sum\limits_{P \in supp(P)} \mu(P)\varphi(P) = p_{0} \right\}
\end{equation*}

Fix a VBP $\mu$ and one of its verifying selection $\varphi(\cdot)$, $\varphi(P)$ is then a \textit{verifying posterior} of $P$. Let $P_{\mu}$ denote the set of all verifying posteriors of $P$ given distribution $\mu$, i.e.

\begin{equation*}
	P_{\mu} \equiv \{p \in P: p = \varphi(P) \text{ for some } \varphi \in \Phi_{\mu} \}
\end{equation*}

The first observation is that $P_{\mu}$ is also a closed and convex subset of $\Delta(\Omega)$, which is an immediate consequence of the sets in the support of $\mu$ being closed and convex. 
\begin{proposition}\label{vp}
	For any verifiably Bayes plausible $\mu$ and $P \in supp(\mu)$, $P_{\mu}$ is a closed and convex set. 
\end{proposition}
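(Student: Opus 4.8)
The plan is to establish convexity and closedness of $P_{\mu}$ separately, both by direct manipulation of verifying selections. Throughout I use that $supp(\mu)$ is finite (as the displayed Bayes-plausibility sums presuppose), that $\mu(P)>0$ whenever $P\in supp(\mu)$, and that every element of $\mathcal{K}(\Omega)$ is a closed, hence compact, subset of the simplex $\Delta(\Omega)$.

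For convexity, suppose $p_{1},p_{2}\in P_{\mu}$, witnessed by verifying selections $\varphi_{1},\varphi_{2}\in\Phi_{\mu}$ with $\varphi_{i}(P)=p_{i}$. For $\lambda\in[0,1]$ define $\varphi_{\lambda}$ pointwise on $supp(\mu)$ by $\varphi_{\lambda}(Q)=\lambda\varphi_{1}(Q)+(1-\lambda)\varphi_{2}(Q)$. Convexity of each $Q\in supp(\mu)$ gives $\varphi_{\lambda}(Q)\in Q$, so $\varphi_{\lambda}$ is a legitimate selection function; and since $\varphi\mapsto\sum_{Q}\mu(Q)\varphi(Q)$ is linear, the constraint $\sum_{Q}\mu(Q)\varphi(Q)=p_{0}$ is preserved, so $\varphi_{\lambda}\in\Phi_{\mu}$. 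Hence $\lambda p_{1}+(1-\lambda)p_{2}=\varphi_{\lambda}(P)\in P_{\mu}$, giving convexity.

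For closedness, I would use the explicit description
\[
P_{\mu}=P\cap\frac{1}{\mu(P)}\Bigl(p_{0}-\sum_{Q\in supp(\mu),\,Q\neq P}\mu(Q)\,Q\Bigr),
\]
the inner sum being a Minkowski sum of rescaled copies of the sets $Q$: indeed $p\in P_{\mu}$ iff $p\in P$ and there exist $q_{Q}\in Q$ for each $Q\neq P$ with $\mu(P)p+\sum_{Q\neq P}\mu(Q)q_{Q}=p_{0}$, which is exactly membership in the set on the right. A Minkowski sum of finitely many compact convex sets is compact and convex, affine maps preserve both properties, and intersecting with the closed set $P$ keeps the result closed (indeed compact). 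Equivalently, one can argue by compactness: $\Phi_{\mu}$ is a closed, hence compact, subset of the compact product $\prod_{Q\in supp(\mu)}Q$ (it is cut out by the continuous constraint $\sum_{Q}\mu(Q)\varphi(Q)=p_{0}$), and $P_{\mu}$ is its image under the continuous evaluation map $\varphi\mapsto\varphi(P)$, hence compact.

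The argument is essentially routine. The only steps needing attention are the finiteness of $supp(\mu)$ — without it the Minkowski sum need not be closed and one must instead appeal to Tychonoff's theorem for product compactness — and the positivity $\mu(P)>0$, which is what makes division by $\mu(P)$ legitimate and the explicit description of $P_{\mu}$ correct.
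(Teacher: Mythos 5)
Your convexity argument is essentially identical to the paper's: both take $p_{1},p_{2}$ with witnesses $\varphi_{1},\varphi_{2}$, form $\varphi_{\lambda}=\lambda\varphi_{1}+(1-\lambda)\varphi_{2}$, and use convexity of each $Q\in supp(\mu)$ plus linearity of the constraint. For closedness you take a genuinely different route, and a cleaner one. The paper takes $p_{n}\to p$ in $P_{\mu}$ with witnesses $\varphi_{n}$, observes $\varphi_{n}(P)\to p$, sets $\varphi(P)=p$, and then claims ``it is easy to verify that such $\varphi$ is a verifying selection.'' As written this only defines $\varphi$ at $P$; one still must produce values $\varphi(Q)\in Q$ for the other $Q\in supp(\mu)$ satisfying the constraint, which requires passing to a subsequence along which every coordinate $\varphi_{n}(Q)$ converges (possible by compactness of each $Q$ and finiteness of the support, but not spelled out). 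Both of your arguments close this gap explicitly: the Minkowski-sum description $P_{\mu}=P\cap\frac{1}{\mu(P)}\bigl(p_{0}-\sum_{Q\neq P}\mu(Q)Q\bigr)$ reduces closedness to compactness of a finite Minkowski sum of compact convex sets plus an affine image and an intersection, and as a bonus gives a clean explicit formula the paper never states; the alternative argument—$\Phi_{\mu}$ is a closed subset of the compact product $\prod_{Q}Q$ and $P_{\mu}$ is its image under the continuous evaluation $\varphi\mapsto\varphi(P)$—is the rigorous version of what the paper was gesturing at. You are also right to flag that finiteness of $supp(\mu)$ (which the paper's displayed sums implicitly assume) is doing real work: without it the Minkowski sum need not be closed and one must fall back on Tychonoff.
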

\begin{proof}
	First show convexity. Given any VBP $\mu$ and $P \in supp(\mu)$, consider the set $P_{\mu}$. Suppose $p_{1}, p_{2} \in P_{\mu}$ such that $p_{1} = \varphi_{1}(P)$ and $p_{2} = \varphi_{2}(P)$. Then it implies 
	\begin{align*}
		\mu(P)p_{1} + \sum\limits_{P' \in supp(\mu)\backslash P} \mu(P') \varphi_{1}(P') = p_{0}\\
		\mu(P)p_{2} + \sum\limits_{P' \in supp(\mu)\backslash P} \mu(P') \varphi_{2}(P') = p_{0}
	\end{align*}
	For any $\lambda \in [0,1]$, let $\varphi_{\lambda}(\cdot) = \lambda \varphi_{1}(\cdot) + (1-\lambda)\varphi_{2}(\cdot)$. Then combine the two equations above implies that $\varphi_{\lambda}$ is also a verifying selection of $\mu$:
	\begin{equation*}
		\mu(P) \lambda [p_{1}+(1-\lambda)p_{2}] + \sum\limits_{P' \in supp(\mu)\backslash P} [\lambda \varphi_{1}(P') + (1-\lambda) \varphi_{2}(P')] = p_{0}
	\end{equation*} 
	Therefore, $\varphi_{\lambda}(P) = \lambda p_{1} + (1-\lambda)p_{2}$ is also contained in $P_{\mu}$. Thus $P_{\mu}$ is convex. 
	
	For closeness, similarly consider a sequence of verifying posteriors $\{p_{n}\}_{n=1,2,\cdots}$ of $P_{\mu}$ with corresponding verifying selection $\{\varphi_{n}\}_{n=1,2,\cdots}$. Suppose $\lim_{n \rightarrow \infty}p_{n} = p$, it implies that $\lim_{n \rightarrow \infty}\varphi_{n}(P) = p \equiv \varphi(P)$. The last term is well defined because $P$ is  closed. Then it is easy to verify that such $\varphi$ is a verifying selection of $\mu$. Therefore, $p \in P_{\mu}$. $\square$. 
\end{proof}

\subsection{Two-step Construction of Ambiguous Devices}
Clearly, not all ambiguous devices can induce a distribution over set of posteriors. What I will show next is that every VBP distribution can be induced by some ambiguous device. In particular, any such ambiguous device can be constructed in a two-step manner. 

For the first step, say that an ambiguous device $\Pi$ is \textit{simple} if for all $\pi, \pi' \in \Pi$, $\tau_{\pi}(m) = \tau_{\pi'}(m)$. Namely, a simple ambiguous device consists of probabilistic devices that generate the same message with the same overall probability. As a result, a simple ambiguous device $\Pi$ generates the probability-possibility set $P^{\Pi}_{m}$ with probability $\tau_{\pi}(m)$. In other words, it induces a distribution $\mu$ with $\mu(P_{m}) = \tau_{\pi}(m)$. Moreover, notice that the induced distribution is always \textit{fully-verified}:

\begin{definition}
	A verifiably Bayes plausible distribution $\mu$ is \textbf{fully-verified} if for all $P \in supp(\mu)$, $P_{\mu } = P$. 
\end{definition}

\begin{lemma}\label{simple}
	A simple ambiguous device induces a fully-verified verifiably Bayes plausible distribution. A fully-verified verifiably Bayes plausible distribution can be induced by a simple ambiguous device. 
\end{lemma}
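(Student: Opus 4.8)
The plan is to prove the two directions separately, each by an explicit inversion of Bayes' rule; throughout I adopt the (harmless) normalization that distinct messages induce distinct probability-possibility sets, so that a simple device's induced distribution is the clean pushforward $\mu(P_m^\Pi) = \tau(m)$, where $\tau(m) \equiv \tau_\pi(m)$ is well defined by simplicity.

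\emph{Simple device $\Rightarrow$ fully-verified VBP distribution.} Fix a simple $\Pi$ and the induced $\mu$. For \textbf{VBP}, fix any $\pi \in \Pi$ and let $\varphi_\pi$ be the selection $P_m^\Pi \mapsto p_m^\pi$; since $p_m^\pi(\cdot) = \pi(m\mid\cdot)\,p_0(\cdot)/\tau(m)$,
\begin{equation*}
\sum_{m}\mu(P_m^\Pi)\,\varphi_\pi(P_m^\Pi)(\cdot) = \sum_m \tau(m)\,p_m^\pi(\cdot) = \sum_m \pi(m\mid\cdot)\,p_0(\cdot) = p_0(\cdot),
\end{equation*}
so $\varphi_\pi \in \Phi_\mu$ and $\mu$ is VBP. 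For \textbf{fully-verified}, fix $P = P_m^\Pi \in supp(\mu)$. Because $\varphi_\pi$ above is a verifying selection for \emph{every} $\pi \in \Pi$, we get $P_\mu \supseteq \{\varphi_\pi(P) : \pi \in \Pi\} = \{p_m^\pi : \pi \in \Pi\} = P_m^\Pi = P$, and combined with the trivial $P_\mu \subseteq P$ this yields $P_\mu = P$.

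\emph{Fully-verified VBP distribution $\Rightarrow$ inducible by a simple device.} Let $\mu$ be fully-verified and VBP with support $\{P_1,\dots,P_k\}$, and take $M = \{m_1,\dots,m_k\}$. The set $\Phi_\mu$ is nonempty (as $\mu$ is VBP). For each $\varphi \in \Phi_\mu$, define
\begin{equation*}
\pi^\varphi(m_i\mid\omega) \equiv \frac{\mu(P_i)\,\varphi(P_i)(\omega)}{p_0(\omega)},
\end{equation*}
which is well defined and nonnegative because $p_0$ has full support, and is a genuine probability kernel because $\varphi$ verifies Bayes plausibility:
\begin{equation*}
\sum_i \pi^\varphi(m_i\mid\omega) = \frac{1}{p_0(\omega)}\sum_i \mu(P_i)\,\varphi(P_i)(\omega) = \frac{p_0(\omega)}{p_0(\omega)} = 1.
\end{equation*}
Two short computations give $\tau_{\pi^\varphi}(m_i) = \mu(P_i)$ (hence independent of $\varphi$) and $p_{m_i}^{\pi^\varphi} = \varphi(P_i)$. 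Put $\Pi \equiv \{\pi^\varphi : \varphi \in \Phi_\mu\}$. Then $\Pi$ is \emph{simple} with common support, since $\tau_{\pi^\varphi}(m_i) = \mu(P_i) > 0$ for all $i$ and all $\varphi$; and $\Pi$ is closed and convex because $\Phi_\mu$ is a closed, convex subset of the compact set $\prod_i P_i$ — convexity is exactly the argument in the proof of Proposition \ref{vp}, closedness is immediate from the defining linear equation — and $\varphi \mapsto \pi^\varphi$ is affine and continuous. Finally,
\begin{equation*}
P_{m_i}^\Pi = \{p_{m_i}^{\pi^\varphi} : \varphi \in \Phi_\mu\} = \{\varphi(P_i) : \varphi \in \Phi_\mu\} = (P_i)_\mu = P_i,
\end{equation*}
the last equality by fully-verification; hence the distribution induced by $\Pi$ assigns $P_i$ the probability $\tau_{\pi^\varphi}(m_i) = \mu(P_i)$, i.e.\ it is exactly $\mu$.

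The main obstacle I anticipate is the second direction: one must check that $\pi^\varphi$ is a legitimate probability kernel — this is precisely where the \emph{verifying} property of $\varphi$ (as opposed to an arbitrary selection from each $P_i$) is indispensable, and it is the reason Verifiable Bayes Plausibility, rather than some weaker condition, is the right hypothesis — and one must confirm that $\Pi = \{\pi^\varphi : \varphi \in \Phi_\mu\}$ inherits closedness and convexity from $\Phi_\mu$ (via compactness of $\prod_i P_i$ and continuity of the affine inversion map). The two Bayes'-rule computations, and the fact that a convex combination of verifying selections is again a verifying selection, are routine and parallel the proof of Proposition \ref{vp}.
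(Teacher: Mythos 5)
The paper omits a proof of Lemma \ref{simple} (declaring it "straightforward"), so there is no argument to compare against; your direct inversion of Bayes' rule in both directions is the natural intended argument, and it is correct. The second direction in particular is cleanly handled: the kernel $\pi^{\varphi}$ is well defined because $p_0$ has full support, sums to one precisely because $\varphi$ is a verifying selection, the map $\varphi \mapsto \pi^{\varphi}$ transports closedness/convexity of $\Phi_{\mu}$ (as a compact convex subset of $\prod_i P_i$) to $\Pi$, and $P_{m_i}^{\Pi} = (P_i)_{\mu} = P_i$ by full verification.

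One caveat worth surfacing: you call the assumption that distinct messages induce distinct probability-possibility sets a "harmless normalization," but for the first direction it is load-bearing, not cosmetic. Without it, a simple device need not induce a fully-verified distribution. For instance, take $\Omega = \{\omega_1, \omega_2\}$, $p_0 = (1/2,1/2)$, $M=\{m_1,m_2\}$, $\pi(m_1\mid\cdot)=(1/4,3/4)=\pi'(m_2\mid\cdot)$, $\pi(m_2\mid\cdot)=(3/4,1/4)=\pi'(m_1\mid\cdot)$, and $\Pi = co\{\pi,\pi'\}$: this $\Pi$ is simple, yet $P_{m_1}^{\Pi}=P_{m_2}^{\Pi}=P$ (first coordinate in $[1/4,3/4]$), so the pushforward into $\Delta(\mathcal{K}(\Omega))$ is $\delta_P$, whose unique verifying posterior is $p_0$, giving $P_{\mu}=\{p_0\}\subsetneq P$. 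Moreover, your $\varphi_{\pi}$ would not even be a well-defined function on $\mathcal{K}(\Omega)$ in that case. The paper's own wording "$\mu(P_m)=\tau_{\pi}(m)$" implicitly presupposes the same non-degeneracy, so your proof is faithful to what the paper intends, but you should present the restriction as a hypothesis the lemma genuinely requires (or note that it can be ensured without loss of optimality by merging messages and passing to the dilation construction) rather than as a free relabeling. The second direction is immune to this because you index messages by the already-distinct elements of $supp(\mu)$.
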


The proof of Lemma \ref{simple} is straightforward thus skipped here. Clearly, not all VBP distributions are fully-verified. Thus simple ambiguous devices alone cannot induce all VBP distributions. However, it suffices to consider ambiguous devices that \textit{dilates} a simple ambiguous device. 

Fix a simple ambiguous device $\Pi$ with the message space $M$. Consider a dilating message space $\tilde{M}$ and a function $g$ maps from $\Omega \times M$ to $\Delta(\tilde{M})$. Moreover, assume that the support of $g(\cdot|m, \omega)$ is $\{\tilde{m}^{i}\}_{i=1,2,\cdots} \subseteq \tilde{M}$ for all $\omega \in \Omega$. Therefore, $\tilde{M}$ can be thought of as a refinement of $M$. 

Let $g \circ \pi$ denote the composition of $g$ and $\pi$, which is a mapping from $\Omega$ to $\Delta(\tilde{M})$:
\begin{equation*}
	(g \circ \pi)(\tilde{m}^{i}|\omega) = \sum\limits_{m'\in M} g(\tilde{m}^{i}|m',\omega') \pi(m'|\omega) = g(\tilde{m}^{i}|m,\omega) \pi(m|\omega)
\end{equation*}
where the second equality follows from $g(\tilde{m}^{i}|m', \omega) > 0$ only when $m' = m$. Therefore, $g \circ \pi$ is a probabilistic device with $\tilde{M}$ being the message space. The posterior corresponding to message $\tilde{m}_{i}$ is then  given by
\begin{equation*}
	q_{\tilde{m}^{i}}^{g\circ \pi}(\omega) = \frac{g(\tilde{m}^{i}|m,\omega) \pi(m|\omega)p_{0}(\omega)}{\sum\limits_{\omega' \in \Omega} g(\tilde{m}^{i}|m,\omega') \pi(m|\omega')p_{0}(\omega')    }
\end{equation*}

Let $\tau_{g\circ \pi} (\tilde{m}^{i}|m)$ denote the overall probability of message $\tilde{m}^{i}$, i.e.
\begin{equation*}
	\tau_{g\circ \pi} (\tilde{m}^{i}|m) = \frac{\sum\limits_{\omega \in \Omega} g(\tilde{m}^{i}|m,\omega) \pi(m|\omega)p_{0}(\omega)}{\sum\limits_{\omega \in \Omega}\pi(m|\omega)p_{0}(\omega) }
\end{equation*}

Then it is easy to verify that the following conditional Bayes plausibility holds:
\begin{equation*}
	\sum\limits_{i} \tau_{g\circ \pi} (\tilde{m}^{i}|m) \cdot 	q_{\tilde{m}^{i}}^{g\circ \pi}(\cdot) = q_{m}^{\pi}(\cdot)
\end{equation*}
Namely, the posteriors generated from refining the message $m$ to $\{\tilde{m}^{i}\}_{i=1,2,\cdots}$ also needs to average back to the posterior given $m$. Then let $P^{g\circ \pi}_{\tilde{m}}$ denote the convex hull of the posteriors $\{ q_{\tilde{m}^{i}}^{g\circ \pi}(\cdot)\}_{i=1,2,\cdots}$ it must be the case that $q_{m}^{\pi} \in P^{g\circ \pi}_{\tilde{m}}$. On the other hand, it also implies that given a posterior $q \in \Delta(\Omega)$ any set $P \subseteq(\Omega)$ containing it can be induced by some function $g$ in this way. 

Next, to let every message $\tilde{m}^{i}$ be able to induce the whole set $P^{g\circ \pi}_{\tilde{m}}$, it suffices to consider permutations of $g$. For example, let $g'$ permute the label of $\tilde{m}^{i}$ and $\tilde{m}^{j}$ such that 
\begin{align*}
	&g' (\tilde{m}^{i}|m,\omega) = g(\tilde{m}^{j}|m,\omega)  \\
	&g' (\tilde{m}^{j}|m,\omega) = g(\tilde{m}^{i}|m,\omega)  
\end{align*} 
for all $\omega \in \Omega$. Then fix $\pi$ and suppose the set $G \equiv co(\{g, g'\})$ is used to generate messages in $\tilde{M}$ in an ambiguous way. As a result, the probability-possibility set $P_{\tilde{m}^{i}}$ and $P_{\tilde{m}^{j}}$ will coincide and equal the convex hull of $p_{\tilde{m}^{i}}^{g\circ \pi}$ and $p_{\tilde{m}^{j}}^{g\circ \pi}$. Analogously, the whole set of posteriors $P^{g\circ \pi}_{\tilde{m}}$ can be generated at any message $\tilde{m}^{i}$ by constructing a set of all possible permutations of $g$. 

Finally, for a simple ambiguous device $\Pi$ inducing the set $P^{\Pi}_{m}$ at message $m$. The above construction can be applied to every $\pi \in \Pi$ such that every construction leads to the same set of posteriors $P_{\tilde{m}}$. Notice that $P^{\Pi}_{m} \subseteq P_{\tilde{m}}$ by construction. Effectively, the messages $\{ \tilde{m}^{i}\}_{i=1,2,\cdots}$ dilate the set $P^{\Pi}_{m}$. 

To summarize, any VBP distribution $\mu$ can be induced by an ambiguous device constructed in the following two steps: 
\begin{enumerate}[(i)]
	\item Find a fully-verified VBP distribution $\mu'$ that satisfies for each $P \in supp(\mu)$ there exists $P' \in supp(\mu')$ such that $P' \subseteq P$ and $\mu'(P') = \mu(P)$. Then construct a simple ambiguous device $\Pi$ induces $\mu'$. 
	\item For each $\pi \in \Pi$, identify the function $g$ dilating the posteriors induced by $\pi$ to the sets $P$ in the support of $\mu$. Then construct probabilistic devices using $\pi$ and all possible permutations of $g$. 
\end{enumerate}
Let $G \circ \Pi$ denote the ambiguous device constructed in this way. Clearly, it contains only probabilistic devices in the form of $g \circ \pi$. Therefore, the following proposition can be proved directly from this construction. 
\begin{proposition}\label{vbp}
	Any verifiably Bayes plausible distribution can be induced by an ambiguous device. 
\end{proposition}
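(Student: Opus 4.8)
The plan is to turn the two-step recipe described above into an actual construction: given a verifiably Bayes plausible $\mu$, produce an ambiguous device whose induced distribution over probability-possibility sets is exactly $\mu$. The substance is checking that each step is well defined and that the masses come out right.

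\emph{Step (i).} Fix a VBP $\mu$ together with some $\varphi \in \Phi_{\mu}$. For each $P \in supp(\mu)$, the set $P_{\mu}$ of its verifying posteriors lies in $\mathcal{K}(\Omega)$ by Proposition \ref{vp}. I would define $\mu'$ on these sets by $\mu'(P_{\mu}) = \mu(P)$, keeping the sets labelled by their preimages in $supp(\mu)$ so no mass is lost in the non-generic case that two of them coincide. Then I would check that $\mu'$ is VBP --- any $\varphi \in \Phi_{\mu}$ has $\varphi(P) \in P_{\mu}$ and still averages to $p_{0}$ --- and that it is fully-verified: each point of $P_{\mu}$ equals $\varphi(P)$ for some $\varphi \in \Phi_{\mu}$, and that same $\varphi$, read as a selection from the sets $P_{\mu}$, is a verifying selection of $\mu'$ picking out that point, so $(P_{\mu})_{\mu'} = P_{\mu}$. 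Lemma \ref{simple} then supplies a simple ambiguous device $\Pi$ inducing $\mu'$; write its messages as $\{m_{P}\}_{P \in supp(\mu)}$, so that $P^{\Pi}_{m_{P}} = P_{\mu}$ and $\tau_{\pi}(m_{P}) = \mu(P)$ for every $\pi \in \Pi$.

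\emph{Step (ii).} Fix $P$ and choose points $\{q^{j}\}_{j}$ of $P$ with $P = co(\{q^{j}\}_{j})$ (finitely many when $P$ is a polytope, otherwise the closed-convex-hull analogue). For each $\pi \in \Pi$ the posterior $p^{\pi}_{m_{P}}$ lies in $P^{\Pi}_{m_{P}} = P_{\mu} \subseteq P$, so by conditional Bayes plausibility I can pick $g_{\pi}(\cdot\,|\,m_{P},\omega)$ with support $\{\tilde{m}^{j}\}_{j}$ and $p^{g_{\pi}\circ\pi}_{\tilde{m}^{j}} = q^{j}$, arranging the choice so that the induced cloud of posteriors has convex hull $P$ for every $\pi$. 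I would then adjoin all label-permutations of each $g_{\pi}$ and let $G\circ\Pi$ be the closed convex hull of the resulting collection of devices $g_{\pi}\circ\pi$. Using the \emph{full} permutation group is what makes a single dilating message recover the entire set $P$: as the permutation and $\pi$ vary and one takes convex combinations, the posterior at a fixed $\tilde{m}^{j}$ runs over every vertex $q^{j}$ and hence, closing up under convex combinations, over all of $co(\{q^{j}\}_{j}) = P$, so $P^{G\circ\Pi}_{\tilde{m}^{j}} = P$. Thus all dilating messages refining $m_{P}$ induce $P$ and those refining distinct $m_{P}$ induce distinct sets; common support holds because every $g_{\pi}$ has full support on $\{\tilde{m}^{j}\}_{j}$ and $\tau_{\pi}(m_{P}) = \mu(P) > 0$.

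It then remains to check that $G\circ\Pi$ induces $\mu$. Under any device $g_{\pi}\circ\pi$, the dilating message $\tilde{m}^{j}$ has overall probability $\tau_{g_{\pi}\circ\pi}(\tilde{m}^{j}\,|\,m_{P})\,\tau_{\pi}(m_{P})$; since $\sum_{j}\tau_{g_{\pi}\circ\pi}(\tilde{m}^{j}\,|\,m_{P}) = 1$ and all these messages induce the set $P$, the total probability that the probability-possibility set equals $P$ is $\tau_{\pi}(m_{P}) = \mu(P)$, the same for every device in $G\circ\Pi$. Hence $G\circ\Pi$ induces precisely $\mu$. The step I expect to be the real obstacle is the one claim used without full justification above: that a single dilating message recovers \emph{all} of $P$ rather than a proper subset. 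The inclusion $P^{G\circ\Pi}_{\tilde{m}^{j}}\subseteq P$ is immediate from conditional Bayes plausibility, but the reverse inclusion is exactly what forces a rich enough family of permutations, and when $P$ is not a polytope it also calls for care in distinguishing the convex hull of the $q^{j}$ from its closure. Everything else reduces to the bookkeeping sketched above, together with Proposition \ref{vp} and Lemma \ref{simple}.
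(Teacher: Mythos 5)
Your proof is correct and follows the paper's own two-step construction exactly; the paper gives no separate formal proof, stating only that the proposition "can be proved directly from this construction," so your version is simply filling in the details the paper leaves implicit (identifying $P'$ as $P_{\mu}$ in step (i), checking $\Phi_{\mu}=\Phi_{\mu'}$ for full-verification, and tracking masses through the dilation). The reverse-inclusion worry you flag at the end is real but resolvable and is also glossed over by the paper: since every $\tilde m^{j}$ has strictly positive overall probability $t_{j}$ under $g\circ\pi$, a convex combination of permuted devices with weights $\gamma_{j}$ on $\{\sigma:\sigma(j)\text{ fixed}\}$ produces the posterior $(\sum_{k}\gamma_{k}t_{k}q^{k})/(\sum_{k}\gamma_{k}t_{k})$, and choosing $\gamma_{k}\propto\alpha_{k}/t_{k}$ hits any target $\sum_{k}\alpha_{k}q^{k}$, so $P^{G\circ\Pi}_{\tilde m^{j}}=co(\{q^{k}\}_{k})=P$.
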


Proposition \ref{vbp} suggests that VBP distributions are relevant objectives to search for the optimal ambiguous information structure. However, since the ambiguous devices constructed in this way are special cases of all ambiguous devices. It is not automatically true that optimal ambiguous persuasion will be given in this form. 

\subsection{Value of Persuasion}
Recall $\hat{f}_{p}(P)$ denotes the receiver's optimal strategy at a set of posteriors $P$ gives the sender most expected payoff according to the belief $p \in P$. The explicit dependence on $p$ is unique to the current construction, since a VBP distribution $\mu$ can be induced by multiple ambiguous devices. Thus, fix a VBP distribution $\mu$, the sender is able to choose her most preferred ambiguous device. Then under this device, the receiver's strategy will be evaluated at the belief $p$. Formally, for any VBP distribution $\mu$ and a verifying selection $\varphi$, the sender's ex-ante payoff from the corresponding ambiguous device is given by: 
\begin{equation*}
	V_{S}(\mu, \varphi) = \sum\limits_{P \in supp(\mu)} \mu(P) \sum\limits_{a, \omega} \hat{f}_{\varphi(P)}(P)(a)u_{S}(a, \omega)\varphi(P)(\omega)
\end{equation*}
The following result establishes the equivalence of value between the ambiguous devices and VBP distributions with some verifying selection. 

\begin{proposition}\label{equivalence}
	The followings are equivalent: 
	\begin{enumerate}[(i)]
		\item There exists an ambiguous device with value $v^{*}$. 
		\item There exists a verifiably Bayes plausible distribution $\mu$ and a verifying selection $\varphi$ such that $V_{S}(\mu, \varphi) = v^{*}$. 
	\end{enumerate}
\end{proposition}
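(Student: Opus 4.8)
The statement is an equivalence between the payoffs achievable by ambiguous devices and those of the form $V_S(\mu,\varphi)$, and I would prove the two inclusions separately, in each case passing through the two-step construction of Section \ref{construction}. Three standing facts are convenient to record first: (a) since $\Omega$ and $M$ are finite, $\Pi$ is a compact convex subset of $\prod_{\omega}\Delta(M)$; (b) each $\bar f(P)$ is a nonempty, compact, convex subset of $\Delta(A)$, being the set of maximizers of the concave continuous map $f\mapsto\min_{p\in P}\sum_{a,\omega}f(a)u_R(a,\omega)p(\omega)$; and (c) the value of a device $\Pi$ is, as usual, the sender's payoff in her preferred equilibrium, $\max_{(f_m)\in\prod_m\bar f(P^{\Pi}_m)}\ \min_{\pi\in\Pi}\ \sum_m\tau_\pi(m)\sum_{a,\omega}f_m(a)u_S(a,\omega)p^\pi_m(\omega)$, whose inner term is bilinear in $((f_m),\pi)$, so by the minimax theorem both optima are attained and there is a saddle point $((f^*_m),\pi^*)$.

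For (i)$\Rightarrow$(ii), fix a device $\Pi$ with value $v^*$ and take such a saddle point. Since $(f^*_m)$ maximizes $\sum_m\tau_{\pi^*}(m)\sum_{a,\omega}f_m(a)u_S(a,\omega)p^{\pi^*}_m(\omega)$ over $\prod_m\bar f(P^{\Pi}_m)$, and this sum separates across $m$, one may take $f^*_m=\hat f_{p^{\pi^*}_m}(P^{\Pi}_m)$. Assuming for the moment that distinct messages induce distinct probability-possibility sets, set $\mu(P^{\Pi}_m)=\tau_{\pi^*}(m)$ and $\varphi(P^{\Pi}_m)=p^{\pi^*}_m\in P^{\Pi}_m$; the identity $\sum_m\tau_{\pi^*}(m)p^{\pi^*}_m=p_0$ says exactly that $\varphi$ is a verifying selection, so $\mu$ is VBP, and substituting $\tau_{\pi^*}(m)p^{\pi^*}_m(\omega)=p_0(\omega)\pi^*(m|\omega)$ back into the value gives $v^*=\sum_m\mu(P^{\Pi}_m)\sum_{a,\omega}\hat f_{\varphi(P^{\Pi}_m)}(P^{\Pi}_m)(a)u_S(a,\omega)\varphi(P^{\Pi}_m)(\omega)=V_S(\mu,\varphi)$. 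When several messages induce the same set, I would first replace $\Pi$ by a value-equivalent device in which this no longer happens, using the dilation step of the two-step construction to spread those messages onto genuinely distinct sets without changing the sender's payoff.

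For (ii)$\Rightarrow$(i), start from a VBP $\mu$ with verifying selection $\varphi$ and apply Proposition \ref{vbp} to get a device inducing $\mu$. Every member of a device inducing $\mu$ carries mass $\mu(P)$ on each message, hence each member $\pi$ determines a verifying selection (sending $P^\Pi_m$ to $p^\pi_m$), and by the same saddle-point reasoning the value of that device equals $V_S(\mu,\varphi')$ for the selection $\varphi'$ realized by a worst-case member $\pi^*$. It thus suffices to build the device so that the prescribed $\varphi$ is the one realized by a worst-case member: I would choose the dilating functions $g$ and their permutations so that the Bayes-consistent member $\pi^\varphi$ — the one whose posterior at each message is $\varphi(P)$ — attains the inner minimum, e.g. by making the construction symmetric enough that the sender's payoff is the same under every member, so that $\pi^\varphi$ is in particular a minimizer. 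Then the receiver's preferred-equilibrium play at a message carrying $P$ is $\hat f_{\varphi(P)}(P)$ and the device's value is exactly $V_S(\mu,\varphi)$.

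The main obstacle is precisely this tie-breaking/worst-case bookkeeping: matching the saddle-point (worst-case) member of the ambiguous device to the prescribed selection $\varphi$ in the (ii)$\Rightarrow$(i) direction, and disposing of the degenerate case of coinciding probability-possibility sets in the (i)$\Rightarrow$(ii) direction. Both rest on the flexibility built into the two-step construction — the permutations of $g$ and the freedom in the dilating message space — rather than on any new idea, but they are where the argument does real work; everything else reduces to linear algebra via $\tau_\pi(m)p^\pi_m=p_0\,\pi(m|\cdot)$ together with the minimax theorem.
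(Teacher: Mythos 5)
Your proposal follows the same route as the paper: for $(i)\Rightarrow(ii)$, identify the worst-case member of $\Pi$, read off $\mu(P_m^\Pi)=\tau(m)$ and $\varphi(P_m^\Pi)=q_m$, and use the identity $\tau_\pi(m)p_m^\pi=p_0\,\pi(m|\cdot)$; for $(ii)\Rightarrow(i)$, invoke the two-step construction of Proposition~\ref{vbp}. What you add is welcome explicitness where the paper is terse. The paper's opening display $v^*=\min_\pi\sum_{\omega,m,a}\hat f_{q_m^\pi}(P_m^\Pi)(a)\pi(m|\omega)u_S(a,\omega)p_0(\omega)$ silently swaps $\max_{(f_m)}\min_\pi$ for $\min_\pi\max_{(f_m)}$, which is exactly the bilinear-saddle-point fact you record; and the paper's ``$(ii)\Rightarrow(i)$ is immediate'' rests on the observation you make precise, namely that the conditional Bayes plausibility $\sum_i\tau_{g\circ\pi}(\tilde m^i|m)q_{\tilde m^i}^{g\circ\pi}=\varphi(P_m)$ makes the sender's payoff $g$-independent once she plays the same tie-break at every refinement $\tilde m^i$, so the Bayes-consistent member is automatically a minimizer.

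The one place your argument as stated does not go through is the fix for coinciding probability-possibility sets in $(i)\Rightarrow(ii)$. You propose to dilate $\Pi$ so that distinct messages carry distinct sets ``without changing the sender's payoff,'' but dilation strictly enlarges the receiver's belief set at the affected messages, which in general changes $\bar f(P)$ (a more pessimistic maxmin receiver acts differently) and hence the sender's payoff. Merging the collided messages and setting $\varphi(P)$ to the $\tau$-barycenter of the corresponding $q_m^{\underline\pi}$ does not work either, since $q\mapsto\max_{f\in\bar f(P)}\sum_{a,\omega}f(a)u_S(a,\omega)q(\omega)$ is convex, so merging only yields $V_S(\mu,\varphi)\le v^*$. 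You are right to flag this case --- the paper's own proof writes $\mu(P_m^\Pi)=\tau_{\underline\pi}(m)$ and $\varphi(P_m)=q_m^{\underline\pi}$ without noticing that the map $m\mapsto P_m^\Pi$ need not be injective --- but the resolution should be bookkeeping rather than dilation: either treat $\mathrm{supp}(\mu)$ as message-indexed (a multiset over $\mathcal K(\Omega)$), or observe that any optimal device can be taken to have distinct posterior sets across messages by an argument that does not alter $\Pi$ itself.
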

\begin{proof}
	The direction $(ii) \Rightarrow (i)$ is immediate given our two-step construction of ambiguous devices. For the other direction $(i) \Rightarrow (ii)$ , notice that 
	\begin{align*}
		v^{*} &= \min\limits_{\pi \in \Pi} \sum\limits_{\omega \in \Omega} p_{0}(\omega) \sum\limits_{m, a} \hat{f}_{q_{m}^{\pi}}(P_{m}^{\Pi})(a) \pi(m|\omega) u_{S}(a, \omega)\\
		& = \min\limits_{\pi \in \Pi}  \sum\limits_{m,a}  \hat{f}_{q_{m}^{\pi}}(P_{m}^{\Pi})(a) \left[ \sum\limits_{\omega \in \Omega} p_{0}(\omega)\pi(m|\omega) \right] \frac{\sum\limits_{\omega \in \Omega} p_{0}(\omega)\pi(m|\omega)u_{S}(a, \omega)}{\sum\limits_{\omega \in \Omega} p_{0}(\omega)\pi(m|\omega)} \\
		& = \min\limits_{\pi \in \Pi} \sum\limits_{m} \tau_{\pi}(m) \sum\limits_{a,\omega}\hat{f}_{q_{m}^{\pi}}(P_{m}^{\Pi})(a) u_{S}(a,\omega)q_{m}^{\pi}(\omega) \\
		& = \sum\limits_{m \in M} \tau_{\underline{\pi}}(m) \sum\limits_{a,\omega}\hat{f}_{q_{m}^{\underline{\pi}}}(P_{m}^{\Pi})(a) u_{S}(a,\omega)q_{m}^{\underline{\pi}}(\omega) 
	\end{align*}
	As $q^{\underline{\pi}}_{m} \in P^{\Pi}_{m}$ for all $m$ and $\sum_{m } \tau_{\underline{\pi}}(m) q^{\underline{\pi}}_{m} = p_{0}$. The distribution given by $\mu(P^{\Pi}_{m}) = \tau_{\underline{\pi}}(m)$ is verifiably Bayes plausible with verifying selection $\varphi(P_{m}) = q^{\underline{\pi}}_{m}$. Therefore, one has $V_{S}(\mu, \varphi) = v^{*}$. 
\end{proof}

\subsection{Characterizing Value by Concavification}
Notice that, fix any VBP distribution $\mu$, the sender is able to achieve the maximum value over all verifying selections. Let $\bar{V}_{S}(\mu)$ denote this value:  
\begin{equation*}
	\overline{V}_{S}(\mu) = \max\limits_{\varphi \in \Phi_{\mu}} \sum\limits_{P \in supp(\mu)} \mu(P) \sum\limits_{a, \omega} \hat{f}_{\varphi(P)}(P)(a)u_{S}(\hat{f}_{\varphi(P)}(P), \omega)\varphi(P)(\omega)
\end{equation*}
The maximum exists\footnote{$P_{\mu}$ is closed and convex implies $\Phi_{\mu}$ is also a closed and convex set of functions.} and it can be achieved by designing a Bayesian device in the first step that induces the maximizing verifying posteriors. 

As a result, the sender's optimal ambiguous persuasion can be simplified to the following program.  
\begin{corollary}
	The value of optimal ambiguous persuasion is given by the following: 
	\begin{align*}
		&\max\limits_{\mu \in \Delta(\mathbf{P})}	\overline{V}_{S}(\mu)\\
		&\text{s. t. } \mu\text{ being verifiably Bayes plausible.}
	\end{align*}
\end{corollary}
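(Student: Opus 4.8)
The plan is to obtain the corollary as an immediate consequence of Propositions \ref{vbp} and \ref{equivalence}, once the inner maximization over verifying selections is understood. Write $v^{*}$ for the value of optimal ambiguous persuasion, i.e. the supremum of the sender's ex-ante payoff over all ambiguous devices. By Proposition \ref{equivalence}, every ambiguous device corresponds to a verifiably Bayes plausible $\mu$ together with a verifying selection $\varphi$ of equal value, and conversely; so the supremum of device values equals $\sup_{\mu}\sup_{\varphi\in\Phi_{\mu}}V_{S}(\mu,\varphi)$, which is just $\sup_{\mu}\overline{V}_{S}(\mu)$ over verifiably Bayes plausible $\mu$.

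First I would establish $\overline{V}_{S}(\mu)\le v^{*}$ for every verifiably Bayes plausible $\mu$. Fix such a $\mu$ and let $\varphi^{*}$ attain the maximum defining $\overline{V}_{S}(\mu)$; this maximum exists because, as noted in the footnote, $P_{\mu}$ closed and convex makes $\Phi_{\mu}$ a closed and convex set of functions, on which $\varphi\mapsto V_{S}(\mu,\varphi)$ is upper semi-continuous. Then statement $(ii)$ of Proposition \ref{equivalence} holds with value $\overline{V}_{S}(\mu)=V_{S}(\mu,\varphi^{*})$, so by $(ii)\Rightarrow(i)$ there is an ambiguous device whose value is $\overline{V}_{S}(\mu)$; since no device exceeds $v^{*}$, we get $\overline{V}_{S}(\mu)\le v^{*}$, hence $\sup_{\mu}\overline{V}_{S}(\mu)\le v^{*}$.

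Next I would show the bound is tight and attained. Taking an optimal (or near-optimal) ambiguous device and applying $(i)\Rightarrow(ii)$ of Proposition \ref{equivalence} produces a verifiably Bayes plausible $\mu^{*}$ and a verifying selection $\varphi^{*}$ with $V_{S}(\mu^{*},\varphi^{*})=v^{*}$ (respectively $\ge v^{*}-\varepsilon$), whence $\overline{V}_{S}(\mu^{*})\ge v^{*}$. Combined with the previous paragraph this gives $\overline{V}_{S}(\mu^{*})=v^{*}=\max_{\mu}\overline{V}_{S}(\mu)$, which is exactly the displayed program (the maximum over $\mu$ being achieved at $\mu^{*}$).

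I expect the genuine work to lie not in the logic of this chain but in the two well-posedness points it rests on: (a) that $v^{*}$ is actually attained by some ambiguous device, so that $(i)\Rightarrow(ii)$ delivers a maximizing $\mu^{*}$ rather than only an $\varepsilon$-maximizing sequence — if this fails one must state the corollary with $\sup$ in place of $\max$; and (b) that $\overline{V}_{S}(\mu)$ is a genuine maximum, which requires upper semi-continuity of $\varphi\mapsto V_{S}(\mu,\varphi)$ on the compact set $\Phi_{\mu}$, and this in turn depends on how the sender-preferred selection $\hat{f}_{p}(P)$ behaves as $p$ varies within $P$. I would handle (b) by arguing that $p\mapsto \sum_{a,\omega}\hat f_{p}(P)(a)u_{S}(a,\omega)p(\omega)$ is upper semi-continuous (the receiver's best-reply correspondence $\bar f(P)$ is fixed once $P$ is, and the sender-preferred tie-break only improves the value), and (a) by invoking the existence of an optimal ambiguous device from \citet{beauchene2019ambiguous}.
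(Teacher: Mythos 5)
Your argument is correct and matches the route the paper leaves implicit: the corollary is read off directly from Proposition~\ref{equivalence} by decomposing the supremum over devices as $\sup_{\mu}\sup_{\varphi\in\Phi_{\mu}}V_{S}(\mu,\varphi)=\sup_{\mu}\overline{V}_{S}(\mu)$, with attainment of the inner maximum handled exactly as the paper's footnote indicates. Your explicit flagging of the two well-posedness points (existence of an optimal device so that $\sup$ is $\max$, and upper semi-continuity of $\varphi\mapsto V_{S}(\mu,\varphi)$ over the compact $\Phi_{\mu}$) is a sound and slightly more careful rendering of what the paper takes for granted.
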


Now the main result, Theorem \ref{concave} can be proved as in the following. 

\begin{proof}[Proof of Theorem \ref{concave}]
	Let $V^{*}(p_{0})$ denote the value of optimal ambiguous persuasion solved from the following program: 
	\begin{align*}
		V^{*}(p_{0}) = 	&\max\limits_{\mu \in \Delta(\mathbf{P})}	\overline{V}_{S}(\mu) \\
		&\text{s. t. } \mu\text{ being verifiably Bayes plausible.}
	\end{align*}
	
	First of all, I show that $\hat{v}(p_{0})$ can be achieved at a VBP distribution $\mu$ with a verifying selection $\varphi$. 
	
	Let $\tau\in \Delta(\Delta(\Omega))$ be the distribution inducing the value $\hat{v}(p_{0})$, thus it satisfies Bayes plausibility. For each $p \in supp(\tau)$, let $P^{*}$ denote the probability-possibility set inducing the value $v(p)$. Then the distribution $\mu$ with $\mu(P^{*}) = \tau(p)$ and verifying selection $\varphi(P^{*}) = p$ clearly inducing the value $\hat{v}(p_{0})$. That is, $V_{S}(\mu, \varphi) = \hat{v}(p_{0})$. Therefore, 
	\begin{equation*}
		V^{*}(p_{0})  = \max\limits_{\mu \in \Delta(\mathbf{P})}	\overline{V}_{S}(\mu) \geq V_{S}(\mu, \varphi) = \hat{v}(p_{0})
	\end{equation*}
	
	Next, I am going to show that it cannot be the case $V^{*}(p_{0}) > \hat{v}(p_{0})$. Towards a contradiction, suppose the ex-ante value from some VBP distribution $\mu$ with verifying selection $\varphi$ is strictly higher than $\bar{v}(p_{0})$. Because 
	\begin{equation*}
		V_{S}(\mu, \varphi) = \sum\limits_{P \in supp(\mu)} \mu(P) \sum\limits_{a, \omega} \hat{f}_{\varphi(P)}(P)(a)u_{S}(\hat{f}_{\varphi(P)}(P), \omega)\varphi(P)(\omega)
	\end{equation*}
	is in the form of an expectation. It therefore implies that there must exist $P' \in supp(\mu)$ such that 
	\begin{equation*}
		\sum\limits_{a, \omega} \hat{f}_{\varphi(P)}(P)(a)u_{S}(\hat{f}_{\varphi(P)}(P), \omega)\varphi(P)(\omega)> \max\limits_{P': \varphi(P)\in P'} \sum\limits_{a, \omega} \hat{f}_{\varphi(P)}(P')(a)u_{S}(\hat{f}_{\varphi(P)}(P'), \omega)\varphi(P)(\omega)
	\end{equation*}
	which is clearly impossible. 
\end{proof}

	\newpage
	\bibliography{references}
	\bibliographystyle{econ-jet}

\end{document}